\documentclass[11pt,a4paper]{article}

\usepackage[T1]{fontenc}
\usepackage[utf8]{inputenc}
\usepackage[margin=1in]{geometry}
\usepackage{mathpazo}
\usepackage[scaled=0.95]{helvet}
\usepackage{amsmath,amssymb,amsthm}
\usepackage{mathtools}
\usepackage[affil-it]{authblk}
\usepackage{microtype}
\usepackage{setspace}
\usepackage{titlesec}
\usepackage[colorlinks,linkcolor=blue,citecolor=blue,urlcolor=blue]{hyperref}

\titleformat{\section}{\normalfont\large\bfseries\sffamily}{\thesection}{0.75em}{}
\titlespacing*{\section}{0pt}{1.0\baselineskip}{0.55\baselineskip}

\makeatletter
\def\thm@space@setup{%
  \thm@preskip=0.6\baselineskip
  \thm@postskip=\thm@preskip
}
\makeatother

\newcommand{\id}{I}

\newcommand{\cart}{\mathbin{\square}}
\newcommand{\qto}{\xrightarrow{\,q\,}}

\theoremstyle{plain}
\newtheorem{theorem}{Theorem}[section]
\newtheorem{lemma}[theorem]{Lemma}

\theoremstyle{definition}
\newtheorem{definition}[theorem]{Definition}

\newtheorem{conjecture}[theorem]{Conjecture}

\title{Quantum $n$-coloring is undecidable for every $n\ge 3$}
\author{Christian Bo Kidmose-Frederiksen}
\author{Alfred Leth Nielsen}
\affil{Undergraduate students, Department of Applied Mathematics and Computer Science,\\
Technical University of Denmark}
\date{}

\begin{document}
\maketitle

\begin{abstract}
The quantum $n$-coloring problem, given graph $G$, asks whether there exists a perfect quantum strategy for the $n$-coloring game of $G$. Previously it was known that quantum $n$-coloring is undecidable only for $n=3$. We extend this to all $n\geq 3$ with one elementary
reduction. Letting $G'=(G\cart K_3)\vee K_{n-3}$, we show that $G$ is quantum $3$-colorable if and only if $G'$ is quantum $n$-colorable.
\end{abstract}

%%%%%%%%%%%%%%%%%%%%%%%%%%%%%%%%%%%%%%%%%%%%%%%%%%%%%%%%%%%%%%%%%%%%%
\section{Introduction}
\label{sec:intro}
\textit{Quantum n-colorability} was originally defined in terms of the nonlocal game in~\cite{ghfqp}. We use a more algebraic, but equivalent~\cite{CMNSW}, Definition~\ref{def:qcol}. 
For $n\in \{1,2\}$ quantum $n$-coloring coincides with the classical $n$-coloring. Thus quantum $1$-coloring is trivial and quantum $2$-coloring is decidable in
polynomial time~\cite{CMNSW}.

Harris~\cite{harris} gave a reduction from
synchronous nonlocal games to quantum $3$-coloring; combined with the known
undecidability of perfect-strategy existence arising from Slofstra's
linear-system games~\cite{slofstra}, this shows that quantum $3$-coloring is
undecidable. Prior to this work and concurrent independent work, decidability
for $n\ge4$ was open. 
This result was also independently proven by Man\v{c}inska~\cite{mancinska} and a gapped version was proven by Culf~\cite{culf}.

We write $G\qto K_n$ when Definition~\ref{def:qcol} is satisfiable for a given graph $G$. We define $G'$ given any finite graph $G$ as:
\[
G':= (G\cart K_3)\vee K_{n-3}.
\]
This mapping is polynomial time computable. Lemma~\ref{lem:reduction} shows that $G\qto K_3\iff G'\qto K_n$. This is a many-one reduction from the case $n=3$,
so quantum $n$-coloring is undecidable for every $n\ge3$.

\section{Setting and definitions}
\label{sec:prelim}
All graphs are finite, non-empty, simple and undirected; we write $u\sim v$ to denote $u$ and $v$ are adjacent. 

The \emph{Cartesian product} of graphs $G$ and $H$, denoted $G\cart H$, is defined as $V(G\cart H) = V(G)\times V(H)$ and $E(G\cart H) = \{\{(u,j),(v,i)\}| (u\sim_Gv \land j=i) \lor (u=v \land j\sim_H i)\}$. In $G\cart K_3$ each vertex $u\in V(G)$ becomes three vertices $\{(u,1),(u,2),(u,3)\}$. The set of these vertices is the \emph{fibre} of $u$, each vertex in the fibre is called a fibre vertex of $u$.

The \emph{join} of graphs $G$ and $H$, denoted $G\vee H$, is defined as the disjoint unions $V(G\vee H) = V(G)\cup V(H)$ and $E(G\vee H) = E(G) \cup E(H) \cup \{\{u,v\}|u\in V(G), v\in V(H)\}$. Any vertex $w\in V(G')=V((G\cart K_3)\vee K_{n-3})$ that is not a fibre vertex is a vertex of the clique $K_{n-3}$, enumerated by $4,\dots,n$.

We refer to the elements $V(K_n)= \{1,\dots,n\}$ as colors, when we consider the quantum $n$-colorability of some graph.

We phrase quantum coloring algebraically in terms of $\mathbb{C}^{d\times d}$, for some dimension $d\in\mathbb{N}$. A \emph{projection} is a matrix $p=p^\ast=p^2\in \mathbb{C}^{d\times d}$.

\begin{definition}[Quantum $n$-coloring]
\label{def:qcol}
Given any $d\in \mathbb{N}$, a \emph{quantum $n$-coloring} of $G$ assigns to each vertex $u\in V(G)$ a
family $\{P_{u,i}\}_{i=1}^{n}\subset \mathbb{C}^{d\times d}$ of projections such that
\[
\text{(completeness)}\quad
\sum_{i=1}^{n}P_{u,i}=\id\ \ \ \forall u\in V(G),
\qquad
\text{(orthogonality)}\quad
P_{u,i}P_{v,i}=0\ \text{ whenever }u\sim v .
\]
If there exists a quantum $n$-coloring of $G$, for some $d\in\mathbb{N}$, we say that $G$ is quantum $n$-colorable.
\end{definition}

We remark that if $\{P_{u,i}\}_{i=1}^{n}$ are the projections assigned to any vertex $u\in V(G)$ for some quantum $n$-coloring of $G$, then $P_{u,i}P_{u,j}=0$ for any $i,j\in \{1,\dots,n\}$ where $i\neq j$.

\section{The reduction}
\label{sec:proof}

\begin{lemma}
\label{lem:reduction}
For every graph $G$ and every $n\ge4$,
\[
G\qto K_3
\iff
G'\qto K_n .
\]
\end{lemma}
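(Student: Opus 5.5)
The plan is to prove the two directions separately: the forward direction by an explicit construction, and the backward direction by a ``compression'' argument built on the clique structure of $G'$.

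\emph{($\Rightarrow$)} Let $\{P_{u,i}\}_{i=1}^3$ be a quantum $3$-coloring of $G$ in $\mathbb{C}^{d\times d}$. The coloring of $G'$ cyclically shifts colors along each fibre and hands the clique the remaining colors classically:
\begin{itemize}
\item for a fibre vertex $(u,j)$, set $Q_{(u,j),i}:=P_{u,\,i+j \bmod 3}$ for $i\in\{1,2,3\}$, with representatives taken in $\{1,2,3\}$, and $Q_{(u,j),i}:=0$ for $i\ge4$;
\item for a clique vertex $k\in\{4,\dots,n\}$, set $Q_{k,k}:=\id$ and $Q_{k,i}:=0$ for $i\neq k$.
\end{itemize}
Completeness is immediate. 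For orthogonality, go through the edge types of $G'$:
\begin{itemize}
\item Inside a fibre, $(u,j)$ and $(u,j')$ with $j\neq j'$ receive in color $i$ two distinct projections of the same vertex $u$. These are orthogonal by the remark after Definition~\ref{def:qcol}.
\item An edge $(u,j)\sim(v,j)$ with $u\sim v$ gives $P_{u,i+j}P_{v,i+j}=0$.
\item A fibre vertex and a clique vertex use disjoint color sets ($\{1,2,3\}$ versus $\{k\}$ with $k\ge4$), so their products vanish.
\item Two distinct clique vertices use distinct single colors.
\end{itemize}

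\emph{($\Leftarrow$)} Let $\{Q_{x,c}\}$ be a quantum $n$-coloring of $G'$ in $\mathbb{C}^{d\times d}$. The key step, and the one I expect to carry the argument, is a saturation lemma for cliques: if $\{Q_{v,c}\}$ is a quantum $n$-coloring of $K_n$, then $\sum_{v}Q_{v,c}=\id$ for every color $c$. The proof runs as follows.
\begin{itemize}
\item For fixed $c$, the $Q_{v,c}$ over the $n$ vertices are pairwise orthogonal projections, so $\sum_v Q_{v,c}$ is a projection and hence $\le \id$.
\item Summing over $c$ and using completeness gives $n\id=\sum_c\sum_v Q_{v,c}$, while each of the $n$ inner sums is $\le\id$.
\item Therefore each inner sum equals $\id$; for instance, take traces and use that a projection of full trace is the identity.
\end{itemize}

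Apply the lemma to the $K_n$ formed by the fibre of $u$ together with the clique $K_{n-3}$. Writing $S_c:=\sum_{k=4}^n Q_{k,c}$, this gives
\[
\sum_{j=1}^3 Q_{(u,j),c}=\id-S_c=:T_c ,
\]
and $T_c$ is a projection independent of $u$. In particular $Q_{(u,j),c}\le T_c$.

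Now pass to the finite-dimensional space $\mathcal H:=\bigoplus_{c=1}^n T_c\mathbb{C}^d$. Its dimension is $\sum_c\operatorname{tr}T_c=nd-(n-3)d=3d>0$. On $\mathcal H$ define the block-diagonal operator
\[
P_{u,j}:=\bigoplus_{c}Q_{(u,j),c}\big|_{T_c\mathbb{C}^d}.
\]
Each block is a projection on $T_c\mathbb{C}^d$ because $Q_{(u,j),c}\le T_c$, so $P_{u,j}$ is a projection. We check that $\{P_{u,j}\}_{j=1}^3$ is a quantum $3$-coloring of $G$:
\begin{itemize}
\item \emph{Completeness} holds blockwise, since $\sum_j Q_{(u,j),c}=T_c$ is the identity on block $c$.
\item \emph{Orthogonality:} if $u\sim v$ in $G$, then $(u,j)\sim(v,j)$ in $G'$, so $Q_{(u,j),c}Q_{(v,j),c}=0$ for every $c$, and hence $P_{u,j}P_{v,j}=0$.
\end{itemize}
Blocks with $T_c=0$ simply drop out. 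After choosing a basis, $\mathcal H\cong\mathbb{C}^{3d}$, so $G\qto K_3$.
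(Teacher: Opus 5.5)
Your proposal is correct and follows the paper's proof. The forward direction uses the same cyclic-shift construction, and the backward direction uses the same saturation argument on the $n$-clique formed by a fibre together with $K_{n-3}$, giving $\sum_j Q_{(u,j),c}=\id-S_c$ independent of $u$. The only difference is that you take the direct sum of all color blocks $T_c\mathbb{C}^d$ (total dimension $3d$), whereas the paper restricts to a single block $\operatorname{Ran}(\id-R_{i^\ast})$ chosen to be nonzero. Either works, since each nonzero block already carries a quantum $3$-coloring of $G$; just make sure the identification $\mathcal H\cong\mathbb{C}^{3d}$ uses an orthonormal basis so the projections stay self-adjoint.
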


\begin{proof}[Proof: forward direction $(\Rightarrow)$]
In the following we are using addition of numbers modulo $3$ with representatives $\{1,2,3\}$. Let $\{P_{u,i}\}_{u\in V(G), i\in\{1,2,3\}}$ form a quantum $3$-coloring of $G$. Let
\[
E_{(u,j),i}
=
\begin{cases}
P_{u,\,i+ j}, & i\in\{1,2,3\},\\
0, & 3<i\le n,
\end{cases}
\qquad
F_{\ell,i}
=
\begin{cases}
\id, & i=\ell,\\
0, & i\ne\ell.
\end{cases}
\]
We will show that $\{E_{(u,j),i}\}_{(u,j)\in V(G\cart K_3),i\in\{1,2,\dots,n\}}\cup\{F_{\ell,i}\}_{\ell \in \{4,\dots,n\},i\in\{1,2,\dots,n\}}$ forms a quantum $n$-coloring of $G'$.

As $i\mapsto i+ j$ permutes
$\{1,2,3\}$, \textbf{completeness} is immediate:
\[
\sum_{i=1}^{n}E_{(u,j),i}
=
\sum_{i=1}^{3}P_{u,\,i+ j}
=
\sum_{k=1}^{3}P_{u,k}
=
\id,
\qquad
\sum_{i=1}^{n}F_{\ell,i}
=
\id .
\]

\paragraph{Orthogonality.} Fix a color $i$ and an edge of $G'$. An edge between fibres: $(u,j)\sim(u',j)$ for any $u\sim u'$ gives
\[
E_{(u,j),i}E_{(u',j),i}
=
P_{u,i+ j}P_{u',i+ j}
=
0.
\]
An edge contained in a fibre: $(u,j)\sim(u,j')$ for any $j\ne j'$ has
$i+ j\ne i+ j'$, thus
\[
E_{(u,j),i}E_{(u,j'),i}=P_{u,i+ j}P_{u,i+ j'}=0.
\]
For any other edge in $G'$ and any specified color at least one projection is zero, meaning they trivially are orthogonal. Hence
$\{E_{(u,j),i}\}_{(u,j)\in V(G\cart K_3),i\in\{1,2,\dots,n\}}\cup\{F_{\ell,i}\}_{\ell \in \{4,\dots,n\},i\in\{1,2,\dots,n\}}$ is a quantum $n$-coloring of $G'$, so $G'\qto K_n$.
\end{proof}

\begin{proof}[Proof: backward direction $(\Leftarrow)$]
Let $\{E_{(u,j),i}\}_{(u,j)\in V(G\cart K_3),i\in\{1,2,\dots,n\}} \cup \{F_{\ell,i}\}_{\ell \in \{4,\dots,n\},i\in\{1,2,\dots,n\}}$ be a quantum $n$-coloring of $G'$ with respect to some $\mathbb{C}^{d\times d}$. For each
color $i$ define the projection $R_i:=\sum_{\ell=4}^{n}F_{\ell,i}$. We will show that there exists a color $i^*$ for which \(I-R_{i^*}\) is nonzero. We then recover a $3$-coloring of $G$ from $\{E_{(u,j),i^*}\}_{u\in V(G),j\in \{1,2,3\}}$ on some $\mathbb{C}^{d'\times d'}$.

We show that for any chosen color, the projections of any fibre sum to a projection that is independent of the fibre.\\
Fix $v$. The vertices
\[
(v,1),(v,2),(v,3),4,\dots,n
\]
form a clique in $G'$ of size $n$, so for each color the corresponding $n$ projections
are pairwise orthogonal. Hence
\[
S_i:=\sum_{j=1}^3E_{(v,j),i}+R_i\preceq\id.
\]
Summing over the $n$ colors and using completeness at each vertex,
\[
\sum_{i=1}^{n}S_i
=
\sum_{i=1}^{n}
\Bigl(
\sum_{j=1}^{3}E_{(v,j),i}+R_i
\Bigr)
=
3\,\id+(n-3)\,\id
=
n\,\id.
\]
Thus
\[
\sum_{i=1}^n(\id - S_i)=0
\]
with every $0\preceq \id - S_i$, so each $S_i=\id$. Therefore
\begin{equation}
\label{eq:corner}
\sum_{j=1}^{3}E_{(v,j),i}
=
\id-R_i
\qquad
\text{for all }v\text{ and }i,
\end{equation}
which is independent of $v$.\\

\noindent We show that $I-R_{i^*} \neq 0$ for some color $i^*$.
\[
\sum_{i=1}^{n}R_i
=
\sum_{\ell=4}^{n}
\sum_{i=1}^{n}F_{\ell,i}
=
\sum_{\ell=4}^{n}\id
=
(n-3)I.
\]
Hence
\[
0\prec 3I=\sum_{i=1}^n\id-R_i.
\]
Therefore $0\prec\id-R_{i^\ast}$ for some color $i^\ast$.

\paragraph{Completeness}
Let $\mathcal{H'} = \operatorname{Ran}(\id-R_{i^\ast})$ where $d'=\dim(\mathcal{H'}) \leq d$.
Now defining $\{H_{u,i}\}_{u\in V(G), i\in \{1,2,3\}}$ by
\begin{align*}
    H_{u,i} = E_{(u,i),i^*}\big{|}_{\mathcal{H'}}
\end{align*}
it follows that each $H_{u,i}$ is a projection on $\mathcal{H}'$. Consider a linear isometry $\phi:\mathcal{H}'\to \mathbb{C}^{d'}$. Using $\phi$ to define $\{P_{u,i}\}_{u\in V(G), i\in \{1,2,3\}} \subset \mathbb{C}^{d'\times d'}$ by $P_{u,i}=\phi \circ H_{u,i} \circ\phi^{-1}$. It is easy to see that $P_{u,i}$ indeed are projections on $\mathbb{C}^{d'}$, and that $\sum_{i=1}^3 P_{u,i} = I_{d'}$ using~\eqref{eq:corner}.

\paragraph{Orthogonality.}

If $u\sim v$ in $G$, then $(u,i)\sim(v,i)$ in $G'$ for each $i$, so for any color $i\in\{1,2,3\}$,
\[
H_{u,i}H_{v,i} =E_{(u,i),i^*}E_{(v,i),i^*}=0.
\]
It follows $P_{u,i}P_{v,i}=0$ for any $u\sim v$ in $G$. Thus $\{P_{u,i}\}_{u\in V(G),i\in\{1,2,3\}}$ is a quantum $3$-coloring of $G$ in $\mathbb{C}^{d'\times d'}$, so $G\qto K_3$.
\end{proof}

\begin{theorem}
\label{thm:main}
For every fixed integer $n\ge3$ the corresponding quantum $n$-coloring decision problem is undecidable.
\end{theorem}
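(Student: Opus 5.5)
The plan is to prove Theorem~\ref{thm:main} by a many-one reduction from the case $n=3$, splitting into the cases $n=3$ and $n\ge4$.

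\textbf{Base case $n=3$.} Here there is nothing new to prove: we invoke the known undecidability of quantum $3$-coloring. Harris~\cite{harris} reduces the existence of perfect strategies for synchronous nonlocal games to quantum $3$-colorability. Slofstra~\cite{slofstra} showed that the existence of perfect strategies for linear-system games, which may be taken synchronous, is undecidable. Composing these gives that $G\qto K_3$ is undecidable. One point needs care. Definition~\ref{def:qcol} uses finite-dimensional projections, so the undecidability we cite must be stated for the same model. Equivalently, it must concern perfect finite-dimensional (tensor-product) strategies, which the definition captures by~\cite{CMNSW}. We will simply assume that the cited chain is stated in this model.

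\textbf{Case $n\ge4$.} Fix $n$ and consider the map $G\mapsto G'=(G\cart K_3)\vee K_{n-3}$.

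\begin{enumerate}
\item Check that this map is computable, indeed in polynomial time. The graph $G'$ has $3|V(G)|+n-3$ vertices, and its edges are listed directly from the definitions of $\cart$ and $\vee$. Since $n$ is fixed, $n-3$ is a constant.
\item By Lemma~\ref{lem:reduction}, $G\qto K_3$ holds if and only if $G'\qto K_n$. So the map is a many-one reduction from quantum $3$-coloring to quantum $n$-coloring.
\item Conclude by contradiction. If a decider for quantum $n$-coloring existed, we would compose it with the map and obtain a decider for quantum $3$-coloring. This contradicts the base case.
\end{enumerate}

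\textbf{Main obstacle.} The mathematical content is carried entirely by Lemma~\ref{lem:reduction}, so the only delicate step is the base case. There we must ensure that the cited undecidability result concerns exactly the notion in Definition~\ref{def:qcol}, i.e.\ finite-dimensional projections, and is not the commuting-operator variant. The undecidability of perfect finite-dimensional strategies for linear-system games is exactly what Slofstra established, and Harris's reduction preserves finite-dimensional perfect strategies. Once this matching is confirmed, the rest is routine.
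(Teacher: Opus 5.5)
Your proposal is correct and follows the paper's proof. The case $n=3$ is taken from Harris's reduction combined with Slofstra's undecidability result. For $n\ge4$, the polynomial-time map $G\mapsto (G\cart K_3)\vee K_{n-3}$ together with Lemma~\ref{lem:reduction} is a many-one reduction from quantum $3$-coloring. Your remark about matching the cited undecidability to the finite-dimensional model of Definition~\ref{def:qcol} is a point the paper leaves implicit (it simply cites Harris and Slofstra). It is a sensible refinement, not a different approach.
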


\begin{proof}
For $n=3$, this follows from Harris's reduction together with the known
undecidability of perfect-strategy existence~\cite{harris,slofstra}.
For $n\ge4$, the map
\[
G\longmapsto(G\cart K_3)\vee K_{n-3}
\]
is polynomial-time computable and, by Lemma~\ref{lem:reduction}, is a
many-one reduction from quantum $3$-coloring to quantum
$n$-coloring.
\end{proof}

\section{Concluding remarks}
\label{sec:concl}
The proof of~\autoref{thm:main} also applies to the quantum commuting-operator model, where the projections act on an arbitrary, possibly infinite-dimensional, Hilbert space $\mathcal{H}$ instead of $\mathbb{C}^{d}$.

Quantum $n$-colorability problems form a subset of quantum $H$-colorability problems. In the framework of quantum homomorphisms~\cite{qhom} quantum $H$-colorability asks if there exists a \textit{quantum homomorphism} from $G$ to $H$, and quantum $n$-colorability asks if there exists a quantum homomorphism from $G$ to $K_n$.

The complexity of quantum $H$-coloring for a general fixed graph $H$ remains
open. It is decidable for bipartite $H$, where it reduces to classical
$H$-coloring, and undecidable for the odd cycles~\cite{ciardo}; the authors expect a quantum analogue of the Hell--Ne\v{s}et\v{r}il dichotomy~\cite{hellnesetril}.

\begin{conjecture}
For every non-bipartite $H$, quantum $H$-coloring is undecidable.
\end{conjecture}

%%%%%%%%%%%%%%%%%%%%%%%%%%%%%%%%%%%%%%%%%%%%%%%%%%%%%%%%%%%%%%%%%%%%%
\section*{Acknowledgements}

We are grateful to our supervisor, David~E. Roberson, who, together
with Laura Man\v{c}inska, introduced the notion of quantum homomorphisms
in~\cite{qhom}. He introduced us to the field and guided both the mathematical
development and presentation of this work. We also learned the subject from his
PhD thesis~\cite{robersonthesis} and from his lectures.

We thank Laura Man\v{c}inska for suggesting that we study the undecidability of
quantum $n$-coloring. We are grateful to Cornelius Sevald-Krause for his insights on computational complexity and to Poul Hjorth for reading an earlier draft and offering helpful advice.

\clearpage

%%%%%%%%%%%%%%%%%%%%%%%%%%%%%%%%%%%%%%%%%%%%%%%%%%%%%%%%%%%%%%%%%%%%%

\end{document}